\pgfplotsset{compat=newest}
\colorlet{ab}{teal}
\newcommand{\ang}[1]{\left\langle #1\right\rangle}
\newcommand{\matrices}[1]{\begin{bmatrix} #1\end{bmatrix}}
\newtheorem{assumption}{Assumption}
\newtheorem{remark}{Remark}
\newtheorem{theorem}{Theorem}
\newtheorem{lemma}{Lemma}
\DeclareMathOperator*{\diag}{diag}
\newcommand{\R}{\mathbb{R}}
\newcommand{\C}{\mathbb{C}}
\newcommand{\rank}{\mathrm{rank}\, }
\newcommand{\kernel}{\mathrm{Ker}\, }
\newcommand{\image}{\mathrm{Im}\, }
\newcommand{\sign}{\mathrm{sign}}
\newcommand{\B}{\mathscr{B}}
\newcommand{\X}{\mathscr{X}}
\newcommand{\Y}{\mathscr{Y}}
\newcommand{\U}{\mathscr{U}}
\newcommand{\V}{\mathscr{V}}
\newcommand{\Ss}{\mathscr{S}}
\newcommand{\Rs}{\mathscr{R}}
\newcommand{\W}{\mathscr{W}}
\newcommand{\M}{\mathscr{M}}
\newcommand{\K}{\mathscr{K}}
\newcommand{\Lap}{\mathcal{L}}
\newcommand{\col}{\mathrm{col}}
\title{Distributed Unknown Input Observer Design with Relaxed Conditions: Theory and Application to Vehicle Platooning}
\author{Ruixuan~Zhao, Guitao~Yang, Thomas~Parisini, and Boli~Chen
	\thanks{R. Zhao and B. Chen are with the Department of Electronic and Electrical Engineering, University College London, London, UK \tt\small(ruixuan.zhao.22@ucl.ac.uk; boli.chen@ucl.ac.uk).}%
	\thanks{G. Yang and T. Parisini are with the Department of Electrical and Electronic Engineering, Imperial College London, London, UK, and with the KIOS Research and Innovation Center of Excellence University of Cyprus, Cyprus. T. Parinisi is also with the Department of Electronic Systems, Aalborg University, Denmark, and with the Department of Engineering and Architecture, University of Trieste, Italy \tt\small(guitao.yang@imperial.ac.uk; t.parisini@imperial.ac.uk).}%
}
\begin{document}
\pagestyle{myheader}
\begin{titlepage}
    \centering
    \vspace*{\fill}
    {\Huge \bfseries Copyright Statement \\[0.5cm]}

    {\large This work has been submitted to the European Control Conference 2025. Copyright may be transferred without notice, after which this version might no longer be accessible. \\[2cm]}
    \vspace*{\fill}
\end{titlepage}

\clearpage

\maketitle

\begin{abstract}
Designing observers for linear systems with both known and unknown inputs is an important problem in several research contexts, for example, fault diagnosis and fault-tolerant control, and cyber-secure control systems, and presents significant challenges in distributed state estimation due to the limited sensing capabilities of individual nodes. Existing methods typically impose an individual input-to-output rank condition on each estimator node, which severely restricts applicability in practical applications. This paper presents a novel distributed unknown-input observer design scheme based on a geometric approach under much weaker assumptions than the ones available in the literature. By leveraging the properties of the $(C, A)$-invariant (conditioned invariant) subspace at each node, our methodology aims at reconstructing portions of the system state that remain unaffected by local unknown inputs, while integrating these estimates via a network-based information exchange. A case study on vehicle platoon control shows the effectiveness of the proposed approach.
\end{abstract}

\smallskip

\section{Introduction}

Motivated by previous developments in centralized estimation, distributed observers over wireless sensor networks for large-scale monitoring are attracting growing research attention, spanning both continuous-time \cite{kim2019completely,yang2023plug,zhang2022decentralized,wang2017distributed,zhao:ecc24} and discrete-time systems \cite{park2016design,mitra2018distributed,wang2019distributed}.
However, these works assume either no input signals or that input information is accessible to all nodes, a limitation in real-world applications. In many practical large-scale or geographically dispersed systems, individual nodes typically only have access to partial input information (e.g., local inputs), highlighting the need for developing less restrictive methodologies for the design of distributed unknown input observers (DUIO).

Consider a continuous-time linear time-invariant (LTI) system
\begin{equation}\label{eq:con_sys}
    \dot x (t) = Ax(t) + Bu(t),
\end{equation}
where ${x \in \mathbb{R}^n}$ is the state vector and ${u\in \mathbb{R}^m}$ is the control input. ${A\in \mathbb{R}^{n\times n}}$ and ${B\in \mathbb{R}^{n\times m}}$ are known state and input matrices, respectively.
The outputs of the system \eqref{eq:con_sys} are measured by a group of sensors distributed over $N$ nodes, defined as
\begin{equation}\label{eq:output}
    y_i(t) = C_ix(t) \, ,
\end{equation}
where $y_i \in \mathbb{R}^{p_i}$ is the output measurement at node~$i$, $C_i \in \mathbb{R}^{p_i\times n}$ and $i\in\mathbf{N}=\{1,2,\cdots,N\}$.

To better distinguish the locally available inputs and unavailable inputs, we partition the system's inputs $u$ into two components: $u_i\in \mathbb{R}^{l_i}$ and $\bar u_i \in \mathbb{R}^{m - l_i}$ ($l_i \leq p_i \leq n$), where $u_i$ is the locally known input signals and $\bar u_i$ is the locally unknown input signals for Node $i$, respectively. Consequently, we can properly partition the input matrix $B$ and input signal $u$ in the following form 
\begin{equation}\label{eq:inputdecomp}
    Bu(t) = B_i u_i(t) +\bar{B}_i \bar{u}_i(t),
\end{equation}
where $B_i \in \mathbb{R}^{n\times l_i}$ and $\bar{B}_i \in \mathbb{R}^{n\times (m - l_i)}$ are the known and unknown input channels, respectively.
According to \eqref{eq:inputdecomp}, \eqref{eq:con_sys} can be reformulated for each node \( i \in \mathbf{N} \) as follows:
\begin{equation}\label{eq:systemDecomposition}
        \dot x(t) = Ax(t) + B_iu_i(t) + \bar B_i \bar{u}_i(t).
\end{equation}
Inspired by unknown input observers in centralized systems \cite{chen1996design,darouach1994full}, recent studies have explored DUIO design for continuous-time systems \cite{yang2022state,cao2023distributed,cao2023distributedauto,zhu2024distributed}.
The common assumption in the existing results is
\begin{equation}
    \rank (C_i\bar{B}_i)=\rank (\bar{B}_i),\label{eq:rank_con}
\end{equation}
which is a highly restrictive condition imposed on each individual estimator node. For many practical systems, such as the vehicle platoon system discussed in Section~\ref{sec:simu}, this rank condition \eqref{eq:rank_con} does not hold, thereby motivating the approach developed in this paper.
The main contributions can be summarized as follows:
\begin{itemize}
    \item  A geometric approach is proposed to determine the infimal $(C_i, A)$-invariant subspace at each node, ensuring that unknown inputs are filtered out in the corresponding quotient space and that the stability of the induced map is maintained.   
\item A joint geometric condition using the infimal subspace is proposed as a less restrictive alternative to the individual input-to-output rank condition \eqref{eq:rank_con}. This joint condition supports the design of a novel DUIO scheme, where unknown input signals are collaboratively addressed.
\item The effectiveness of the proposed DUIO is demonstrated through a case study on vehicle platooning, underscoring its practical applicability and advantages over existing DUIO schemes.
\end{itemize}

The paper is organized as follows: Section~\ref{sec:Prel} provides the preliminaries 
for the proposed DUIO design.
Section~\ref{sec:geo_approach} presents the geometric approach for identifying the \((C_i, A)\)-invariant subspace. The design of the DUIO, building on our geometric approach, is detailed in Section~\ref{sec:duio}. In the subsequent section, a case study on a vehicle platoon is provided. Finally, concluding remarks are given in Section~\ref{sec:con}.

\section{Preliminaries and Problem Statement}\label{sec:Prel}
\subsection{Notation}
Let $\mathbb{R}$ and $\mathbb{C}$ denote the real and complex numbers, respectively. $\R_{>0}$ is the set of positive real numbers. A symmetric partition on $\mathbb{C}$ is denoted by $\C = \C_g \cup \C_b$ with $\C_g \cap \C_b = \varnothing$ where $C_g$ denotes the partial complex plane consisting of ``good" (e.g. stable) eigenvalues and $C_b$ denotes the partial complex plane consisting of ``bad" (e.g. unstable) eigenvalues. We will denote by $I_n$ an $n\times n$ identity matrix. $\mathbf{1}_{n\times n}$ and 
$\mathbf{0}_{n\times n}$ are $n\times n$ ones matrix and zeros matrix, respectively. 
Let $\|\cdot\|_1$, $\|\cdot\|_2$ and $\|\cdot\|_\infty$ denote the 1-norm, 2-norm and $\infty$-norm of a vector or a matrix, respectively. 
$\otimes$ stands for the Kronecker product, and $\uplus$ represents the union with any common elements repeated. $\sign(\cdot)$ is the sign function. $\mathrm{col}(M_1,M_2,\ldots,M_n)$ denotes the stacked matrix $[M_1^\top,M_2^\top,\cdots,M_n^\top]^\top$ and ${\rm diag}(M_1,M_2,\ldots,M_n)$ denotes the block diagonal matrix composed of $M$'s. $M^\dagger$ represents the pseudo inverse of $M$. $\kappa(M)$ represents the spectrum of $M$. $\sigma_{\min}(M)$ is the minimum singular value of $M$. 

\subsection{A Few Graph Theory Definitions and Results}
The communication network considered in this paper is modeled as an undirected graph \(\mathcal{G} = (\mathbf{N}, \mathcal{E}, \mathcal{A})\), where \(\mathbf{N} = \{1, 2, \dots, N\}\) represents the set of sensor nodes, and \(\mathcal{E} \subseteq \mathbf{N} \times \mathbf{N}\) denotes the set of communication links between the nodes. The adjacency matrix \(\mathcal{A} = [a_{ij}] \in \mathbb{R}^{N \times N}\) characterizes the network, where \(a_{ij} = a_{ji} = 1\) if \((i, j) \in \mathcal{E}\) (indicating a communication link between nodes \(i\) and \(j\)), and \(a_{ij} = a_{ji} = 0\) otherwise.
The Laplacian matrix of \(\mathcal{G}\), denoted as \(\mathcal{L} = [l_{ij}] \in \mathbb{R}^{N \times N}\), is defined by $l_{ij}=\sum_{j=1, j \neq i}^{N} a_{ij}$ if $i=j$, and $l_{ij}=-a_{ij}$ if $i\neq j$.

\begin{lemma}\cite{ren2007information}\label{lemma:eig_Laplacian_1}
Given an undirected, connected graph \(\mathcal{G} = (\mathbf{N}, \mathcal{E}, \mathcal{A})\), \(\frac{\mathbf{1}_{N \times 1}}{\sqrt{N}}\) is both the left and right eigenvector of the Laplacian matrix \(\mathcal{L}\), corresponding to the zero eigenvalue.
\end{lemma}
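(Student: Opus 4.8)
The plan is to reduce the claim to two elementary structural facts about the Laplacian of an undirected graph: its symmetry and the vanishing of its row sums. First I would observe that, because $\mathcal{G}$ is undirected, the adjacency matrix satisfies $a_{ij}=a_{ji}$, so $\mathcal{A}$ is symmetric; since the diagonal degree matrix $D=\diag(\deg(1),\dots,\deg(N))$ is trivially symmetric and $\mathcal{L}=D-\mathcal{A}$, the Laplacian $\mathcal{L}$ is symmetric. This symmetry is the key ingredient: for a real symmetric matrix every right eigenvector is simultaneously a left eigenvector for the same eigenvalue, so it will suffice to verify the right-eigenvector property alone and then transpose.

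Next I would compute $\mathcal{L}\mathbf{1}_{N\times1}$ directly from the definition of the Laplacian entries. The $i$-th component of this product is $\sum_{j=1}^N l_{ij} = l_{ii} + \sum_{j\neq i} l_{ij} = \deg(i) - \sum_{j\neq i} a_{ij}$, and since $\deg(i)=\sum_{j\neq i} a_{ij}$ by the definition of $l_{ii}$, each component vanishes. Hence $\mathcal{L}\mathbf{1}_{N\times1}=\mathbf{0}_{N\times1}$, i.e.\ $\mathbf{1}_{N\times1}$ is a right eigenvector of $\mathcal{L}$ associated with the eigenvalue $0$. Taking transposes and using $\mathcal{L}^\top=\mathcal{L}$ gives $\mathbf{1}_{N\times1}^\top\mathcal{L}=\mathbf{0}_{1\times N}$, so $\mathbf{1}_{N\times1}$ is also a left eigenvector for the same eigenvalue. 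Scaling by $1/\sqrt{N}$ alters neither property, so $\mathbf{1}_{N\times1}/\sqrt{N}$ is the asserted (now unit-norm) left and right eigenvector.

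The connectedness hypothesis plays no role in establishing that $\mathbf{1}_{N\times1}/\sqrt{N}$ lies in the kernel of $\mathcal{L}$; it enters only if one wishes to argue that this eigenvector is essentially unique, i.e.\ that the eigenvalue $0$ is simple. For that refinement I would invoke the quadratic-form identity $v^\top\mathcal{L} v=\sum_{(i,j)\in\mathcal{E}}(v_i-v_j)^2$: any $v$ in the kernel forces $v_i=v_j$ across every edge, and connectedness then propagates the equality to all nodes, so $v$ must be constant. Since none of these steps is genuinely difficult, there is no real obstacle here; the only point requiring care is the bookkeeping in the Laplacian's definition, ensuring that the diagonal degree term exactly cancels the off-diagonal row sum.
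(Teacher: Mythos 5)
Your proof is correct and follows the standard argument: the vanishing row sums of $\mathcal{L}$ (the diagonal degree term cancels the off-diagonal sum exactly) give $\mathcal{L}\mathbf{1}_{N\times 1}=\mathbf{0}_{N\times 1}$, and symmetry of the Laplacian of an undirected graph transfers the right-eigenvector property to the left, with the normalization $1/\sqrt{N}$ changing nothing. The paper itself supplies no proof---the lemma is quoted directly from \cite{ren2007information}---and your argument coincides with the classical one found there, including the accurate remark that connectedness is needed only to make the zero eigenvalue simple (justifying the definite article ``the'' eigenvector), which you correctly handle via the quadratic form $v^\top\mathcal{L}v=\sum_{(i,j)\in\mathcal{E}}(v_i-v_j)^2$.
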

\subsection{Problem Statement}
Consider the system described by \eqref{eq:systemDecomposition} and \eqref{eq:output}. The objective is to design a class of distributed unknown input observers (DUIOs), denoted as \( \mathcal{O} = \{ \mathcal{O}_i \}_{i \in \mathbf{N}} \), capable of reconstructing the entire system state \( x \). Each observer \( \mathcal{O}_i \) utilizes its locally available input \( u_i \) and output \( y_i \), while accounting for the presence of unknown inputs \( \bar{u}_i \) at each node. Additionally, each node exchanges its local state estimates with neighboring nodes via an undirected communication graph \( \mathcal{G} \). Let $\hat{x}_i$ be the estimated state of node $i$, which is expected to converge to the system state as
\begin{equation*}
\lim_{t \rightarrow \infty} \left\|\hat{x}_i(t)  - x(t) \right\|=0,\quad \forall  i \in \mathbf{N}.
\end{equation*}
\begin{figure}[htp]
    \centering
    \scalebox{0.80}{\begin{tikzpicture}
\def\off{18}
\def\N{5}
\def\R{2.5}
\pgfmathparse{360/\N}
\edef\step{\pgfmathresult}

\colorlet{net}{teal}
\tikzset{comm/.style = {color=net, very thick, dash pattern=on 4pt off 1.5pt}}

\node [circle, 
        draw, 
        color=net, 
        fill=white, 
        text=black, 
        very thick,
        inner sep=6pt] (O1) at (2.5,1.5) {$\mathcal O_{1}$};
\node [circle, 
        draw, 
        color=net, 
        fill=white, 
        text=black, 
        very thick,
        inner sep=6pt] (O2) at (4,0.5) {$\mathcal O_{2}$};
\node [circle, 
        draw, 
        color=net, 
        fill=white, 
        text=black, 
        very thick,
        inner sep=6pt] (O3) at (2, -0.5) {$\mathcal O_{3}$};
\node [circle, 
        draw, 
        color=net, 
        fill=white, 
        text=black, 
        very thick,
        inner sep=6pt] (O4) at (3.5, -1.5) {$\mathcal O_{4}$};
\draw [comm] (O1) -- (O2);
\draw [comm] (O2) -- (O3);
\draw [comm] (O3) -- (O4);
\draw[semithick, color=olive] (1.2,-2.6) -- (4.8,-2.6) -- (4.8,2.3) -- (1.2,2.3) -- cycle;

\node[above] at (3,-2.6) {DUIO};

\node [draw,
        rectangle,
        fill = lightgray,
        minimum width=1.2cm,
        minimum height=4cm] (sys) at (-1,0) {Plant}; 

\draw [-latex, semithick, color=gray] (-0.4,1.5) -- (O1) node[midway, above, align=left] {};
\draw [-latex, semithick, color=gray] (-0.4,0.5) -- (O2) node[midway, above, align=left] {};
\draw [-latex, semithick, color=gray] (-0.4,-0.5) -- (O3) node[midway, above, align=left] {};
\draw [-latex, semithick, color=gray] (-0.4,-1.5) -- (O4) node[midway, above, align=left] {};
\node[above] at (0.5,1.5) {$u_1$\ $y_1$};
\node[above] at (0.5,0.5) {$u_2$\ $y_2$};
\node[above] at (0.5,-0.5) {$u_3$\ $y_3$};
\node[above] at (0.5,-1.5) {$u_4$\ $y_4$};

\draw [-latex, semithick, color=gray] (O1) -- (6,1.5) node[midway, above, align=left] {};
\draw [-latex, semithick, color=gray] (O2) -- (6,0.5) node[midway, above, align=left] {};
\draw [-latex, semithick, color=gray] (O3) -- (6,-0.5) node[midway, above, align=left] {};
\draw [-latex, semithick, color=gray] (O4) -- (6,-1.5) node[midway, above, align=left] {};
\node[right] at (6,1.5) {$\hat{x}_1$};
\node[right] at (6,0.5) {$\hat{x}_2$};
\node[right] at (6,-0.5) {$\hat{x}_3$};
\node[right] at (6,-1.5) {$\hat{x}_4$};
\end{tikzpicture}}
    \\[-1ex]
    \caption{A distributed unknown input observer consisting of 4 sensor nodes: each local observer $\mathcal{O}_i$ has only access to its local input $u_i$ and local measurement $y_i$. Moreover, each node can share its local estimates with neighboring nodes through a communication network described by the dashed line to collectively reconstruct the whole system state.
    }
    \label{fig:network}
\end{figure}
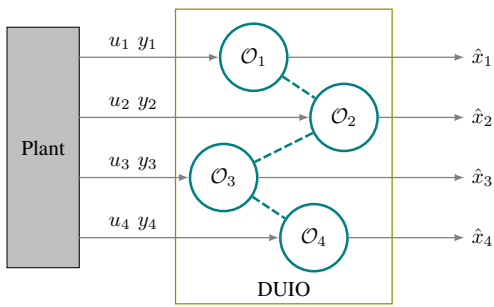
The overall DUIO framework is depicted in Fig.~\ref{fig:network}, illustrating the distributed nature of state estimation and information sharing across the network.

In the remainder of this article, we make the following assumptions:
\begin{assumption}\label{as:connected}
    The communication topology \(\mathcal{G} = (\mathbf{N}, \mathcal{E}, \mathcal{A})\) associated with the network of DUIO is connected.
\end{assumption}
\begin{assumption}\label{as:input_bound}
    The unknown input signal $\bar{u}_i$, $\forall i\in\mathbf{N}$, is bounded such that 
    \begin{equation}
        \|\bar{u}_i\|_\infty \le \bar{u}_{\max}.
    \end{equation}
where $\bar{u}_{\max}$ is a known constant.
\end{assumption}

\section{A Geometric Subspace Decomposition}\label{sec:geo_approach}
\subsection{Element of Geometric Approach}
\subsubsection{Notations and Definitions}
Let $A:\X \rightarrow \X$ be an endomorphism, and let $\W \subseteq \X$ be a subspace with insertion map $W:\W \rightarrow \X$, i.e., $\W=\image W$ and $W$ is monic. A subspace $\W \subseteq\X$ is said to be {\em invariant} with respect to a map $A:\X\rightarrow \X$ if $A\W \subseteq \W$. For an invariant subspace $\W$, we denote by $A|\W: \W \to \W$ the {\em restriction of $A$ to $\W$}, i.e., the unique map satisfying $AW=W(A|\W)$, where $W:\W \rightarrow \X$ is the insertion map of $\W$ in $\X$.
The set $\mathcal{W}=x+\W$ is called a coset of $\W$ in $\X$, and $x\in \X$ is called the representative for $\W$. The set of all cosets of $\W$ in $\X$ is denoted by
$ \X/\W\coloneqq\{x+\W: x\in \X\} $
and is called the quotient (factor) space of $\X$ modulo $\W$. 
Sometimes we denote $\X/\W$ as $\frac{\X}{\W}$ for simplicity. 
We denote by $A|\X/\W$ the map {\em induced on $\X/\W$ by $A$}, satisfying $\left( A|\X \!/\! \W \right) P=PA$, where $P:\X\to\X/\W$ is the canonical projection on $\X/\W$. Let $C:\X \rightarrow \Y$ be a map. If $\Ss \subseteq \Y$, $C^{-1}\Ss$ denotes the \textit{inverse} image of $\Ss$ under C, which is defined by
    $C^{-1}\Ss \coloneqq \{x:x\in \X \ \& \  Cx \in \Ss\} \subseteq \X$.
Note that $C^{-1}$ is the inverse image function of map $C$, and as such it is regarded as a function from the set of all subspaces of $\Y$ to those of $\X$. If $\mathscr{R}, \mathscr{S} \subseteq \mathscr{X}$, we define the subspace $\mathscr{R} + \mathscr{S} \subseteq \mathscr{X}$ as $\mathscr{R} + \mathscr{S}= \{r+s: r\in \mathscr{R}\ \&\  s\in \mathscr{S}\}$, and we define the subspace $\mathscr{R} \cap \mathscr{S} \subseteq \mathscr{X}$ as $\mathscr{R} \cap \mathscr{S}= \{x: x\in \mathscr{R}\ \&\  x\in \mathscr{S}\}$. 
The symbol $\oplus$ indicates that the subspaces being added are independent.
We indicate that two vector spaces $\mathscr V$ and $\mathscr W$ are isomorphic by $\mathscr V \simeq \mathscr W$.
Furthermore, $\mathrm{Mat}(A)$ is the matrix representation of map $A$ relative to the given basis pair.
    \subsubsection{$(C,A)$-invariant Subspace}
Let $A:\X\rightarrow \X$ and $C:\X\rightarrow \Y$. We say a subspace $\W \subseteq \X$ is $(C,A)$-invariant if there exists a map $L:\Y\rightarrow \X$ such that
\begin{equation}\label{eq:def(C,A)-inv}
    (A+LC)\W \subseteq \W.
\end{equation}
The class of $L$ for which \eqref{eq:def(C,A)-inv} holds is denoted by $\mathbf{L}(\W)$ if $A$ and $C$ are clear in the context. The notation $L\in \mathbf{L}(\W)$ reads ``$L$ is a friend of $\W$". 

Let $\B\subseteq \X$. We write $\underline{\W}(C,A;\B)$ to denote the class of $(C,A)$-invariant subspaces containing $\B$. $\underline{\W}(C,A;\B)$ is closed under the operation of subspace intersection, hence $\underline{\W}(C,A;\B)$ has an infimal element (i.e., a subspace with minimal dimension) which is denoted by $\W^*(C,A;\B)$ or simply $\W^*(\B)$ if $A$ and $C$ are clear in the context.



\subsubsection{$(A,B)$-invariant Subspace}
Let $A:\X\rightarrow \X$ and $B:\U\rightarrow \X$. We say a subspace $\V \subseteq \X$ is $(A,B)$-invariant if there exists a map $F:\X\rightarrow \U$ such that
\begin{equation}\label{eq:def(A,B)-inv}
    (A+BF)\V \subseteq \V.
\end{equation}
The class of $F$ for which \eqref{eq:def(A,B)-inv} holds is denoted by $\mathbf{F}(\V)$ if $A$ and $B$ are clear in the context. The notation $L\in \mathbf{F}(\V)$ also reads ``$F$ is a friend of $\V$". 

Let $\K \subseteq \X$. We write $\underline{\V}(\K)$ to denote the class of $(A,B)$-invariant subspaces \textit{contained in} $\K$. $\underline{\V}(\K)$ is closed under the operation of subspace addition, hence $\underline{\V}(\K)$ has a supremal element (i.e., a subspace with maximal dimension) which is denoted by $\V^*(A,B;\K)$ or simply $\V^*(\K)$ if $A$ and $B$ are clear in the context.

\subsubsection{Controllability Subspace}
We say a subspace $\Ss \subseteq \X$ is a $(A,B)$ controllability subspace if 
\begin{equation}
    \Rs = \ang{A+BF\,|\, \image BG }
\end{equation}
for some state feedback map $F:\X \rightarrow \U$ and some input mixing map $G:\U \rightarrow \U$. Note that the controllability subspace is a different concept from the controllable subspace\footnote{The controllable subspace can be defined as $\ang{A|\B} := \B + A\B + \cdots + A^{n-1} \B$, where $\B = \image B$}.

Let $\K \subseteq \X$. We denote the class of controllability subspaces \textit{contained in} $\K$ by $\underline{\Rs}(\K)$. Note that $\underline{\Rs}(\K)$ has a supremal element which is denoted by $\Rs^*(A,B;\K)$ (the supremal controllability subspace that is contained in $\K$) or simply $\Rs^*(\K)$ if $A$ and $B$ are clear in the context.

\subsubsection{Insertion Map and Canonical Projection of Dual Space}
Let $\W \subset \X$ and let 
\[\left(\frac{\X}{\W}\right)' = \W^\perp .\]
If $P:\X \rightarrow \X/\W$ is the canonical projection, then $P': \W^\perp \rightarrow \X'$ is the insertion map from $\W^\perp$ to $\X'$. If $W:\W \rightarrow \X$ is the insertion map, then $W':\X'\rightarrow \X'/\W^\perp$ is the canonical projection. The corresponding commutative diagram is depicted in Fig.~\ref{fig:insertion_canonical_dual}. Note that the matrix form of $W$ and $P'$, as well as $P$ and $W'$ can be identical.

\begin{figure}[htp]
\centering
\begin{tikzcd}[column sep=40pt]
\W \arrow{d}[left]{W}  & \W^\perp \arrow{d}[right]{P'} \\
\X \arrow{d}[left]{P}  & \X' \arrow{d}{W'} \\
\X/\W  & \X'/\W^\perp
\end{tikzcd}\\[-1.2ex]
\caption{Insertion map and canonical projection of quotient spaces.}
\label{fig:insertion_canonical_dual}
\end{figure}

\subsection{Subspace Decomposition}\label{sec:space_decomposition}
The problem introduced in the previous section involves distinct unknown input channels \( \bar{B}_i \) at different nodes, as formulated in \eqref{eq:systemDecomposition}. Consequently, it becomes essential to extract the portion of the system's state that remains unaffected by the unknown inputs. To address this challenge without relying on the restrictive assumption in \eqref{eq:rank_con}, we propose a geometric approach that isolates this unaffected state information.

The main idea is to find, at each node, the infimal \((C_i, A)\)-invariant subspace, denoted as \(\W_{g,i}^*\), that contains \(\image \bar{B}_i\). This subspace facilitates assigning the spectrum of the induced map on \(\X/\W_{g,i}^*\), driven by \(A_{L_i}\), where \(A_{L_i} \coloneqq A + L_i C_i\), with \(L_i: \Y_i \to \X\) representing the output injection map for node \(i\). The goal is to configure the spectrum of the restricted map \(A_{L_i}|\X/\W_{g,i}^*\) to reside within the desired region of the complex plane—typically the stable region—through the appropriate selection of \(L_i\).

Thus, we are able to estimate \(P_{\W_{g,i}^*} x\), where \(P_{\W_{g,i}^*}: \X \to \X/\W_{g,i}^*\) is the canonical projection operator. This ensures that the estimate is not influenced by the unknown input \(\bar{u}_i\), while preserving as much state information as possible. In particular, the dimension of \(\kernel P_{\W_{g,i}^*}\) is minimized, ensuring that the projection retains critical information about the system state.
Consider a family of subspaces that satisfy our requirements
\begin{equation}\label{eq:def_W_gfamily}
\begin{split}
    \underline{\W_{g,i}} 
    \coloneqq &
    \{\W_i: \W_i \in \underline{\W_i}(C_i,A;\image \bar{B}_i),\\ 
    &\& \,\exists L_i \in \mathbf{L}(\W_i)\ \mathrm{s.t.}\ \kappa(A_{L_i}|\X/\W_i) \subset \C_g \}.
\end{split}
\end{equation}
Since finding such $\underline{\W_{g,i}}$ is not trivial, we resort to the dual concept of the $(C,A)$-invariant subspace, which is concluded in the following Theorem:
\begin{lemma}\cite[Theorem~6]{massoumnia1986geometric}
\label{lem:dual(C,A)-inv}
    Let $\W \in \X$. $\W$ is $(C,A)$-invariant if and only if $\W^\perp$ is $(A',C')$-invariant.
\end{lemma}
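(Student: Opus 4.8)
The plan is to reduce the equivalence to a single elementary fact about annihilators under transposition and then match the two definitions by identifying the correct dual roles. The fact I would establish first is the following: for any endomorphism $M:\X\to\X$ and any subspace $\W\subseteq\X$, one has $M\W\subseteq\W$ if and only if $M'\W^\perp\subseteq\W^\perp$. This is immediate from the adjoint identity $\ang{M'\phi,w}=\ang{\phi,Mw}$: if $M\W\subseteq\W$ and $\phi\in\W^\perp$, then $\ang{M'\phi,w}=\ang{\phi,Mw}=0$ for every $w\in\W$, so $M'\phi\in\W^\perp$; the reverse implication follows by symmetry using $M''=M$ and $(\W^\perp)^\perp=\W$ in finite dimension.

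For the forward direction, I would assume $\W$ is $(C,A)$-invariant and pick a friend $L\in\mathbf{L}(\W)$, so that $(A+LC)\W\subseteq\W$. Applying the annihilator fact with $M=A+LC$ gives $(A+LC)'\W^\perp\subseteq\W^\perp$. Since transposition reverses composition and is additive, $(A+LC)'=A'+(LC)'=A'+C'L'$. Here the map $C':\Y'\to\X'$ plays the role of the input channel, and $L':\X'\to\Y'$ has exactly the type required of a feedback map for $(A',C')$-invariance on the state-dual space $\X'$. Setting $F\coloneqq L'$ therefore yields $(A'+C'F)\W^\perp\subseteq\W^\perp$, i.e. $F\in\mathbf{F}(\W^\perp)$, so $\W^\perp$ is $(A',C')$-invariant.

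The converse runs along the same lines in reverse: assuming $\W^\perp$ is $(A',C')$-invariant with friend $F:\X'\to\Y'$, I would set $L\coloneqq F'$, use $\X''\simeq\X$, $\Y''\simeq\Y$ and $(\W^\perp)^\perp=\W$, and invoke the annihilator fact again to recover $(A+LC)\W\subseteq\W$. The clean way to organize both directions at once is to observe that $L\mapsto L'$ is a bijection between maps $\Y\to\X$ and maps $\X'\to\Y'$, so the existential quantifier ``there exists $L$'' transports verbatim to ``there exists $F=L'$''; combined with the biconditional annihilator fact, this gives the equivalence in one stroke.

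The step I expect to need the most care is the bookkeeping of dual roles rather than any genuine difficulty: one must verify that in the dual problem it is $A'$ that plays the role of the state map, $C'$ that plays the role of the input matrix (not $A'$ together with some output constraint), and $\X'$ that is the ambient space, so that the defining relation $(A'+C'F)\W^\perp\subseteq\W^\perp$ genuinely matches the $(A,B)$-invariance template $(A+BF)\V\subseteq\V$ under the correspondence $(A,B,F,\V)\leftrightarrow(A',C',L',\W^\perp)$. Confirming that $L'$ has codomain $\Y'$, the correct domain for the factor $C'$ standing to its left, is the one place where a type mismatch could silently creep in, and checking it is precisely what legitimizes the identification $F=L'$.
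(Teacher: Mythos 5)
Your proof is correct. Note that the paper itself offers no proof of this lemma---it is quoted verbatim from Massoumnia's thesis (Theorem~6)---and your transposition argument is precisely the standard one for this duality: the annihilator fact $M\W\subseteq\W \iff M'\W^\perp\subseteq\W^\perp$, the identity $(A+LC)'=A'+C'L'$, and the bijection $L\mapsto L'$ transporting the existential quantifier, with the type-check $L':\X'\to\Y'$ matching the feedback-map template $(A+BF)\V\subseteq\V$ under $(A,B,F,\V)\leftrightarrow(A',C',L',\W^\perp)$ carried out exactly where it needs to be. The only cosmetic remark is that the lemma's hypothesis ``$\W\in\X$'' should read $\W\subseteq\X$ (a subspace), which your argument implicitly and correctly assumes.
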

We then define a family of dual subspaces as follows
\begin{equation}\label{eq:def_V_g'family}
\begin{split}
    \underline{\V_{g,i}'} 
    \coloneqq &
    \{\V_i': \V_i' \in \underline{\V_i}(A',C_i';\kernel \bar{B}_i'),\\ 
    &\& \,\exists L_i' \in \mathbf{F}(\V_i')\ \mathrm{s.t.}\ \kappa(A_{L_i}'|\V_i') \subset \C_g \}.
\end{split}
\end{equation}
By virtue of the Lemma~\ref{lem:dual(C,A)-inv}, we know that for every dual space $\V_i' \in \underline{\V_{g,i}}'$, there exists a subspace $\W_i \in \underline{\W_{g,i}}$ such that $\V_i' = \W_i^\top$, i.e., $\V_i' \simeq \X/\W_i$. Moreover, from \cite[Chap. 5.6]{wonham1985linear}, we know that $\underline{\V'_{g,i}}$ has a supreme element $\V_{g,i}'^*$, which can be identified by the following lemma:
\begin{lemma}\cite[Lemma~5.8]{wonham1985linear}
\label{lem:V'_gi*}
    Define $\V_i'^* \coloneq \V^*(A',C_i';\kernel \bar B_i')$ and $\Rs_i'^* \coloneqq \Rs^*(A',C_i';\kernel \bar B_i')$.
    Choose $L_{0,i}' \in \mathbf{F}(\V'^*)$, write $A_{0,i}' = A' + C_i' L_{0,i}'$.
    Let $P_{R_i^*}': \X' \to \X'/\Rs_i'^*$ be the dual space canonical projection, and let $\bar A_{0,i}'$ be the dual map induced in $\X'/\Rs_i'^*$ by $A_{0,i}$.
    Let $\beta_i(\lambda)$ be the minimal polynomial of $\bar A_{0,i}'|(\V'^*/\Rs'^*)$. Factor $\beta_i (\lambda) = \beta_{g,i}(\lambda)\beta_{b,i}(\lambda)$, where the zeros of $\beta_g$ in $\C$ belong to $\C_g$. Define  
    \begin{equation*}
        \bar \X_{g,i}'^* \coloneqq \frac{\V_i'^*}{\Rs_i'^*} \cap \kernel \beta_{g,i} (\bar A_{0,i}')
    \end{equation*}
    Then the subspace $\V_{g,i}'^*$ defined by
    $
        \V_{g,i}'^* \coloneqq P_{R_i^*}'^{-1} \bar{\X}_{g,i}'^*,
    $
    which is the largest member of the family $\underline{\V'_{g,i}}$ defined by \eqref{eq:def_V_g'family}.
\end{lemma}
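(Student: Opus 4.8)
The plan is to verify the two defining requirements of a supremal element separately: first that the constructed subspace $\V_{g,i}'^*$ belongs to the family $\underline{\V_{g,i}'}$ of \eqref{eq:def_V_g'family}, and second that it contains every other member. Throughout I would work on the dual side with the pair $(A',C_i')$, where the injection-type maps $L_i'\in\mathbf{F}(\V_i')$ play the role of feedback, and I would lean on three standard facts from the geometric theory (\cite{wonham1985linear}, Ch.~5): (i) for an $(A',C_i')$-invariant $\V_i'$ with supremal controllability subspace $\Rs_i'$ inside it, the closed-loop spectrum splits as $\kappa((A'+C_i'L_i')|\V_i') = \kappa((A'+C_i'L_i')|\Rs_i') \uplus \kappa(\bar A_i'|(\V_i'/\Rs_i'))$; (ii) the induced fixed-mode map on $\V_i'/\Rs_i'$ is independent of the friend $L_i'$; and (iii) the controllability part $\kappa((A'+C_i'L_i')|\Rs_i')$ is freely (symmetrically) assignable by the choice of friend. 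I write $P := P_{R_i^*}'$ for brevity.

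\emph{Membership.} Since $\bar\X_{g,i}'^*$ is the intersection of the $\bar A_{0,i}'$-invariant subspace $\V_i'^*/\Rs_i'^*$ with the spectral subspace $\kernel \beta_{g,i}(\bar A_{0,i}')$, it is itself $\bar A_{0,i}'$-invariant; hence its preimage $\V_{g,i}'^* = P^{-1}\bar\X_{g,i}'^*$ is $A_{0,i}'$-invariant and satisfies $\Rs_i'^* \subseteq \V_{g,i}'^* \subseteq \V_i'^* \subseteq \kernel \bar B_i'$, so $L_{0,i}'$ already exhibits $(A',C_i')$-invariance inside $\kernel\bar B_i'$. Because $\Rs_i'^*$ is the supremal controllability subspace in $\kernel\bar B_i'$ and lies inside $\V_{g,i}'^*$, it is also the supremal controllability subspace of $\V_{g,i}'^*$; fact~(i) then identifies $\V_{g,i}'^*/\Rs_i'^* = \bar\X_{g,i}'^*$ as the fixed-mode part, whose spectrum $\kappa(\bar A_{0,i}'|\bar\X_{g,i}'^*)\subset\C_g$ by the factorization $\beta_i=\beta_{g,i}\beta_{b,i}$. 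Assigning the controllability part into $\C_g$ via fact~(iii) yields a friend placing the entire restricted spectrum in $\C_g$, so $\V_{g,i}'^*\in\underline{\V_{g,i}'}$.

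\emph{Supremality.} Let $\V_i'\in\underline{\V_{g,i}'}$ with friend $L_i'$, so $\kappa((A'+C_i'L_i')|\V_i')\subset\C_g$; supremality of $\V_i'^*$ gives $\V_i'\subseteq\V_i'^*$, and I must show $P\V_i'\subseteq\bar\X_{g,i}'^*$. The first step is that $\V_i'+\Rs_i'^*$ is $A_{0,i}'$-invariant: for $v\in\V_i'$ one has $A_{0,i}'v=(A'+C_i'L_i')v + C_i'(L_{0,i}'-L_i')v$, where the first term lies in $\V_i'$ and the second, being in $\image C_i'\cap\V_i'^*\subseteq\Rs_i'^*$ (using $\Rs_i'^*=\ang{A_{0,i}'\,|\,\image C_i'\cap\V_i'^*}$), lies in $\Rs_i'^*$; with $A_{0,i}'\Rs_i'^*\subseteq\Rs_i'^*$ this gives invariance, so $P\V_i'$ is $\bar A_{0,i}'$-invariant. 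The same identity shows $A_{0,i}'\equiv(A'+C_i'L_i') \pmod{\Rs_i'^*}$ on $\V_i'$ and that $\V_i'\cap\Rs_i'^*$ is $(A'+C_i'L_i')$-invariant, so under the isomorphism $P\V_i'\simeq\V_i'/(\V_i'\cap\Rs_i'^*)$ the map $\bar A_{0,i}'|P\V_i'$ is the quotient of $(A'+C_i'L_i')|\V_i'$; hence $\kappa(\bar A_{0,i}'|P\V_i')\subseteq\kappa((A'+C_i'L_i')|\V_i')\subset\C_g$. Since $\bar\X_{g,i}'^*$ is the largest $\bar A_{0,i}'$-invariant subspace of $\V_i'^*/\Rs_i'^*$ with spectrum in $\C_g$, we obtain $P\V_i'\subseteq\bar\X_{g,i}'^*$, i.e. $\V_i'\subseteq\V_{g,i}'^*$.

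The hardest part will be the supremality direction, specifically the two structural claims that transfer the $\C_g$-stability of $\V_i'$ (measured with its own friend $L_i'$) onto the fixed-mode map $\bar A_{0,i}'$ (defined through the different friend $L_{0,i}'$): that $\image C_i'\cap\V_i'^*\subseteq\Rs_i'^*$ forces $\V_i'+\Rs_i'^*$ to be $A_{0,i}'$-invariant, and that the induced dynamics modulo $\Rs_i'^*$ are friend-independent so that $\bar A_{0,i}'|P\V_i'$ inherits the spectrum of $(A'+C_i'L_i')|\V_i'$. Both rest on the generator identity $\Rs_i'^* = \ang{A_{0,i}'\,|\,\image C_i'\cap\V_i'^*}$. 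Once these are in place, the membership direction and the final supremality conclusion follow routinely from the spectral-assignment facts (i)--(iii).
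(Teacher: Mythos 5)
The paper gives no proof of this lemma---it is quoted verbatim (in dualized form) from Wonham, so the only benchmark is the proof of Lemma~5.8 in that reference. Your argument is correct and is essentially that proof transcribed to the $(A',C_i')$ setting: membership via invariance of the spectral subspace, the identification of $\Rs_i'^*$ as the supremal controllability subspace of $\V_{g,i}'^*$, and free assignability of the controllability part; supremality via the generator identity $\Rs_i'^* = \ang{A_{0,i}'\,|\,\image C_i'\cap\V_i'^*}$, which makes $\V_i'+\Rs_i'^*$ invariant under $A_{0,i}'$ and renders the induced map modulo $\Rs_i'^*$ friend-independent, exactly as in the cited source.
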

Lemma~\ref{lem:V'_gi*} identifies the supremal element in the family $\underline{\V'_{g,i}}$, which implies there is an infimal element in $\underline{\W_{g,i}}$, named as $\W_{g,i}^*$. Since $\V_{g,i}'^* = \W_{g,i}^{*\perp}$, a basis of $\V_{g,i}'^*$ can be regarded as a matrix form of the canonical projection $P_{W_{g,i}^*}$, and the basis of $\W_{g,i}^*$ can be calculated easily by knowing $P_{W_{g,i}^*} \W_{g,i}^* =0$.

Once the desired subspace $\W_{g,i}^*$ is identified, we can readily reconstruct $P_{W_{g,i}^*}x$ at each node. This reconstruction remains unaffected by the unknown input $\bar{u}_i$, as illustrated in the commutative diagram in Fig.~\ref{fig:Wg_decompose}. It is worth noting that $\W_{g,i}^*$ serves as the informal $(C_i,A)$-invariant subspace containing $\image \bar B_i$, ensuring the stability of $A_{L_i}|\X/\W_{g,i}^*$. This implies that $P_{W_{g,i}^*}x$ represents the maximum information reconstructable at each node.

\begin{figure}[htp] 
\centering
    \begin{tikzpicture}


\def\xgap{3.25}
\def\ygap{1.5}
\foreach \i in {0,...,3}{
    \foreach \j in {0,...,3}{
        \coordinate (\i\j) at (\i*\xgap, \j*\ygap);
    }
}
\pgfmathatantwo{\ygap}{\xgap}
\def\labangle{\pgfmathresult}

\tikzset{label/.style = {font=\small, midway}}

\node (barU) at (01) {$\bar \U_i$};
\node (XmodWg) at (10) {$\X / \W^*_{g,i}$};
\node (XmodWg2) at (20) {$\X / \W^*_{g,i}$};
\node (X) at (11) {$\X$};
\node (X2) at (21) {$\X$};
\node (Wg) at (12) {$\W^*_{g,i}$};
\node (Wg2) at (22) {$\W^*_{g,i}$};

\draw[-latex, dashed] (barU) -- (XmodWg) node [label, below, rotate=-\labangle] {$0$};
\draw[-latex] (barU) -- (X) node [label, above] {$\bar B_i$};

\draw[-latex] (XmodWg) -- (XmodWg2) node [label, above] {$A_{L_i}|\X/\W_{g,i}^*$};
\draw[-latex] (XmodWg) -- (XmodWg2) node [label, below] {spectrum good};
\draw[-latex] (X) -- (X2) node [label, above] {$A_{L_i}$};
\draw[-latex] (Wg) -- (Wg2) node [label, above] {$A_{L_i}|\W_{g,i}^*$};

\draw[-latex] (Wg) -- (X) node [label, right] {$W_{g,i}^*$};
\draw[-latex] (Wg2) -- (X2) node [label, right] {$W_{g,i}^*$};

\draw[-latex] (X) -- (XmodWg) node [label, right] {$P_{W_{g,i}^*}$};
\draw[-latex] (X2) -- (XmodWg2) node [label, right] {$P_{W_{g,i}^*}$};

\end{tikzpicture}\\[-1.5ex]
    \caption{Commutative diagram of $\W_{g,i}^*$ decomposition at node $i$.}
    \label{fig:Wg_decompose}
\end{figure}
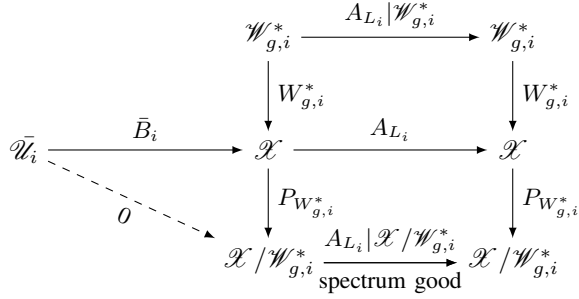

\section{Distributed Unknown Input Observer}\label{sec:duio}
In the DUIO design, we leverage \(\W_{g,i}^*\) to map the shared information (consensus terms) onto the subspace \(\W_{g,i}^*\), while ensuring the convergence of the estimated state within the quotient space \(\X/\W_{g,i}^*\) through the appropriate choice of \(L_i\), as discussed in Section~\ref{sec:space_decomposition}. By imposing a joint geometric condition on the subspace \(\W_{g,i}^*\), the full system state can be locally reconstructed at each sensor node.

The dynamics of the DUIO for node $i$, $i\in\mathbf{N}$, are determined by
\begin{equation}
\begin{split}
    \dot{\hat{x}}_i &= (A+L_iC_i)\hat{x}_i-L_i y_i + \chi W_{g,i}^* {W_{g,i}^*}^\top \sum_{j=1}^N a_{ij} (\hat x_j - \hat x_i)\\
    &\quad \;+ \gamma W_{g,i}^*\mathrm{sign}\left( {W_{g,i}^*}^\top \sum_{j=1}^N a_{ij} (\hat x_j - \hat x_i) \right)  + B_i u_i
\end{split}\label{eq:DUIO_ct}
\end{equation}
where \( L_i \in \mathbb{R}^{n \times p_i} \) and \( W_{g,i}^* \in \mathbb{R}^{n \times w_{g,i}^*} \) have been designed in Section~\ref{sec:space_decomposition}, and \(\chi\) and \(\gamma \in \mathbb{R}_{>0}\) denote the coupling gains for the consensus terms, which will be determined by Theorem~\ref{thm:main_ct}. As presented in \eqref{eq:DUIO_ct}, the first two lines outline the standard consensus-based state estimation update rule, while the third term is specifically designed to compensate for the influence of unknown inputs.

Let $e_i\coloneq x-\hat x_i$ be the estimation error at node $i$. Then, the error dynamics of node $i$ can be expressed as
\begin{equation}
\begin{split}
        \dot{e}_i=&A_{L_i}e_i+\bar{B}_i \bar{u}_i-\chi W_{g,i}^*{W_{g,i}^*}^\top \sum_{j=1}^N a_{ij} (e_i - e_j)\\
        &-\gamma W_{g,i}^*\mathrm{sign}\left( {W_{g,i}^*}^\top \sum_{j=1}^N a_{ij} (e_i - e_j) \right) .
\end{split}
\end{equation}

Let \( T_i \coloneq \begin{bmatrix} W_{g,i}^* & P_{W_{g,i}^*}^\top \end{bmatrix} \in \mathbb{R}^{n \times n} \), where \( W_{g,i}^* \in \mathbb{R}^{n \times w_{g,i}^*} \) and \( P_{W_{g,i}^*} \in \mathbb{R}^{(n - w_{g,i}^*) \times n} \), be an orthonormal transformation adapted to the subspace decomposition at node \( i \). Consequently, the state space can be expressed as
$
    \X = \W_{g,i}^*\oplus\M_i^*, 
$
where \( \M_i^* \simeq \X/\W_{g,i}^* \) represents an arbitrary complementary subspace to \( \W_{g,i}^* \), with its orthonormal basis denoted by \( P_{W_{g,i}^*}^\top \). Therefore, we can obtain
\begin{subequations}
\begin{align*}
   &W_{g,i}^\top P_{W_{g,i}^*}^\top = \mathbf{0}_{w_{g,i}^*\times(n-w_{g,i}^*)},\\
   &{W_{g,i}^*}^\top W_{g,i}^* = I_{w_{g,i}^*},\ 
   P_{W_{g,i}^*} P_{W_{g,i}^*}^\top = I_{n-w_{g,i}^*}.
   \end{align*}
\end{subequations}
%

Referring to the commutative diagram in Fig.~\ref{fig:Wg_decompose}, the following relationships hold:
\begin{subequations}
\begin{align*}
    &W_{g,i}^*(A_{L_i}|\W_{g,i}^*) = A_{L_i}W_{g,i}^*,\\ 
    &(A_{L_i}|\mathscr{X}/\W_{g,i}^*)P_{W_{g,i}^*} = P_{W_{g,i}^*} A_{L_i}. 
\end{align*}
\end{subequations}

Consequently, after applying the transformation by \( T_i \), the matrix representations of \( A_{L_i}|\W_{g,i}^* \) and \( A_{L_i}|\X/\W_{g,i}^* \) are given as follows:
\begin{subequations}\label{eq:AL}
\begin{align}
    &\Tilde{A}_{L_i} \coloneq \mathrm{Mat}(A_{L_i}|\W_{g,i}^*) = W_{g,i}^\top A_{L_i} W_{g,i}, \label{eq:tildeAL}\\
    &\Bar{A}_{L_i} \coloneq \mathrm{Mat}(A_{L_i}|\X/\W_{g,i}^*) = P_{W_{g,i}^*} A_{L_i} P_{W_{g,i}^*}^\top, \label{eq:barAL}
\end{align}
\end{subequations}
where \( \Bar{A}_{L_i} \) is Hurwitz, as the spectrum of \( A_{L_i}|\X/\W_{g,i}^* \), denoted by \( \kappa(A_{L_i}|\X/\W_{g,i}^*) \), is placed within the stable region through the selection of \( L_i \), as described in Section~\ref{sec:space_decomposition}. 

Let $e \coloneq \col(e_1, e_1, \cdots, e_N)$, the dynamics of the entire estimation error can be rewritten as
\begin{multline}
    \dot{e}=\left(A_L  - \chi W_g^* {W_g^*}^\top(\Lap\otimes I_n)\right)e\\
     + \bar{B}\bar{u}- \gamma W_g^* \mathrm{sign}\left({W_g^*}^\top(\Lap\otimes I_n)e\right)
    \label{eq:error_ct}
\end{multline}
where
\begin{subequations}
\begin{align*}
    &A_L = \diag(A_{L_1},\cdots,A_{L_N}),
    &\!\!\!\!\!\!W_g^* = \diag(W_{g,1}^*,\cdots,W_{g,N}^*),\\
    &\bar{B} = \diag(\bar{B}_1,\cdots,\bar{B}_N),\  
    &\!\!\!\!\!\!\bar{u} = \col(\bar{u}_1,\cdots,\bar{u}_N).
    \end{align*}
\end{subequations}
Similarly to \( W_g^* \), define \( M = \mathrm{diag}(P_{W_{g,1}^*}^\top, \dots, P_{W_{g,N}^*}^\top) \), from which we construct an orthonormal transformation matrix \( T \in \mathbb{R}^{nN \times nN} \), where
\begin{equation}\label{eq:basis_T}
    T \coloneqq \begin{bmatrix} W_g^* & M \end{bmatrix}.
\end{equation}
Under this transformation, the matrix representation of the operator \( A_L - \chi W_g^* W_g^{*\top} (\Lap \otimes I_n) \) with respect to the new basis \( T \) is given by:
\[
    T^\top \left( A_L - \chi W_g^* W_g^{*\top} (\Lap \otimes I_n) \right) T = \begin{bmatrix}
        A_a & \star \\
        \mathbf{0} & A_b
    \end{bmatrix},
\]
where
\begin{subequations}
\begin{align*}
    &A_a = \Tilde{A}_L - \chi W_g^{*\top} (\Lap \otimes I_n) W_g^*, \\ 
    &A_b = \mathrm{diag}(\bar{A}_{L_1}, \dots, \bar{A}_{L_N}), \ 
    \Tilde{A}_L = \mathrm{diag}(\Tilde{A}_{L_1}, \dots, \Tilde{A}_{L_N}).
\end{align*}
\end{subequations}
The term \( \star \) is omitted as it does not influence the eigenvalues of the transformed matrix.

Let $\epsilon\coloneq T^\top e$, and $\epsilon\coloneq[\epsilon_a^\top\ \epsilon_b^\top]^\top$, the expression of \eqref{eq:error_ct} in the new coordinate can be reformulated as
\begin{equation}
\begin{split}
        \begin{bmatrix}
        \dot\epsilon_a\\\dot\epsilon_b
    \end{bmatrix}=&\begin{bmatrix}
        A_a&\star\\
        \mathbf{0}&A_b
    \end{bmatrix}\begin{bmatrix}
        \epsilon_a\\\epsilon_b
    \end{bmatrix}+\begin{bmatrix}
        {W_g^*}^\top\bar{B}\\
        \mathbf{0}
    \end{bmatrix}\bar{u}\\
    &-\begin{bmatrix}
        \gamma\mathrm{sign}\left({W_g^*}^\top(\Lap\otimes I_n)(W_g^*\epsilon_a+M\epsilon_b) \right)\\
        \mathbf{0}
    \end{bmatrix}
\end{split}\label{eq:epsilon}
\end{equation}
%
From \eqref{eq:epsilon}, it can be seen that the overall error dynamics of the DUIO decompose into two distinct components: \( \epsilon_a \), which is stabilized by two consensus terms related to the subspace \( \W_{g,i}^* \), and \( \epsilon_b \), which is stabilized by local output injections associated with the quotient space \( \X/\W_{g,i}^* \). 
The following assumption is required to obtain the main results.
\begin{assumption}\label{as:ct}
The subspace $\W_{g,i}^*$ of each node $i$ has the following joint property:
    \begin{equation}
        \bigcap_{i=1}^N\W_{g,i}^* = 0.\label{con:con_for_ct_system}
    \end{equation}
\end{assumption}
%
%
\begin{remark}
    When there are no unknown inputs at all the nodes, i.e., $\bar{B}_i = 0,\,\forall i \in \mathbf{N}$, $\W^*_{g,i}$ reduces to the undetectable subspace at node~$i$. As a consequence, Assumption~\ref{as:ct} becomes identical to the joint detectability assumption imposed in the nominal distributed observer design (see, for example, \cite{yang2023plug} and \cite{wang2017distributed}).  Furthermore, it can be analytically demonstrated that the assumptions underlying existing DUIO designs are also encompassed by Assumption~\ref{as:ct}. A detailed discussion of this will be provided in the journal version, as it lies beyond the scope of the present paper.
\end{remark}
\smallskip
%

%

The following main theorem demonstrates the selection of the coupling gains \(\chi\) and \(\gamma\) that ensure a valid DUIO design.
\begin{theorem}\label{thm:main_ct}
Consider the DUIO designed in \eqref{eq:DUIO_ct} under Assumptions~\ref{as:connected}-\ref{as:input_bound}, for each node $i$, $i\in\mathbf{N}$, if $L_i$ is chosen such that $\kappa(A_{L_i}|\X/\W_{g,i}^*)=\bar{\Lambda}_i\subset\C_g$ and the following conditions hold
\begin{equation}
    \begin{split}
        \chi&> \frac{\left\|\Tilde{A}_L\right\|_2}{\sigma_{\min}\left({W_g^*}^\top(\Lap\otimes I_n)W_g^*\right)},\\ 
        \gamma&>\bar{u}_{\max}\max_{i\in\mathbf{N}}\left(\left\| \bar{B}_i \right\|_1\right)\max_{i\in\mathbf{N}}\left(\left\|W_{g,i}^*\right\|_{\infty}\right)
    \end{split}\label{con:gain}
\end{equation}
Then, the estimation error $e(t)$ converges to $0$ asymptotically.
\end{theorem}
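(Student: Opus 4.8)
The plan is to exploit the block-triangular structure of the transformed error dynamics \eqref{eq:epsilon} and to argue by a cascade. First I would observe that the lower block is \emph{autonomous} and disturbance-free: because each $\W_{g,i}^*$ contains $\image\bar B_i$ and is $A_{L_i}$-invariant, one has $P_{W_{g,i}^*}\bar B_i=0$ and $P_{W_{g,i}^*}A_{L_i}W_{g,i}^*=0$, so $M^\top\bar B=\mathbf{0}$, $M^\top W_g^*=\mathbf{0}$, and the lower-left block vanishes. Consequently $\dot\epsilon_b=A_b\epsilon_b$ with $A_b=\diag(\bar A_{L_1},\dots,\bar A_{L_N})$ Hurwitz by the spectrum placement $\kappa(A_{L_i}|\X/\W_{g,i}^*)\subset\C_g$, giving $\epsilon_b(t)\to 0$ exponentially. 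The problem then reduces to proving $\epsilon_a(t)\to 0$, treating $\epsilon_b$ as an exponentially vanishing signal that enters through the $\star$ block and through the argument of the $\sign$ term.

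The key structural step is to show that $Q\coloneqq {W_g^*}^\top(\Lap\otimes I_n)W_g^*$ is positive definite; this is precisely where Assumption~\ref{as:ct} is used. Since the graph is connected (Assumption~\ref{as:connected}), Lemma~\ref{lemma:eig_Laplacian_1} yields $\ker(\Lap\otimes I_n)=\{\mathbf{1}_N\otimes v:\,v\in\R^n\}$. If $\epsilon_a^\top Q\epsilon_a=(W_g^*\epsilon_a)^\top(\Lap\otimes I_n)(W_g^*\epsilon_a)=0$, then $W_g^*\epsilon_a$ is a consensus vector, i.e.\ $W_{g,i}^*\epsilon_{a,i}=v$ for a common $v$ and all $i$; hence $v\in\bigcap_{i}\W_{g,i}^*=0$ by Assumption~\ref{as:ct}, and since each $W_{g,i}^*$ has orthonormal columns, $\epsilon_a=0$. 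Thus $Q\succ 0$ and $\sigma_{\min}(Q)>0$, which is exactly what makes the continuous consensus gain $\chi$ effective.

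With $Q\succ 0$ in hand, I would take $V=\tfrac12\epsilon_a^\top\epsilon_a$ and evaluate $\dot V$ along \eqref{eq:epsilon}. The quadratic part obeys $\epsilon_a^\top A_a\epsilon_a\le(\norm{\Tilde A_L}_2-\chi\,\sigma_{\min}(Q))\norm{\epsilon_a}^2$, which is strictly negative under the first bound in \eqref{con:gain}. The persistent unknown-input contribution ${W_g^*}^\top\bar B\bar u$ is rejected by the discontinuous term: using Assumption~\ref{as:input_bound} and submultiplicativity of the induced norms, the second bound in \eqref{con:gain} ensures that $\gamma$ upper-bounds the worst-case magnitude of ${W_g^*}^\top\bar B\bar u$, so the $\gamma\,\sign(\cdot)$ term dominates it and makes the associated term in $\dot V$ nonpositive. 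The coupling through $\epsilon_b$, both linear and inside the $\sign$ argument, is treated as a vanishing perturbation via an ISS/comparison argument; convergence $\epsilon_a\to 0$, and hence $e=W_g^*\epsilon_a+M\epsilon_b\to 0$, then follows.

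The main obstacle is the rigorous handling of the discontinuous $\sign$ term against a \emph{persistent} (non-vanishing) unknown input, compounded by the fact that its argument $s={W_g^*}^\top(\Lap\otimes I_n)e=Q\epsilon_a+{W_g^*}^\top(\Lap\otimes I_n)M\,\epsilon_b$ is neither aligned with $\epsilon_a$ nor free of the decaying $\epsilon_b$. This calls for a nonsmooth (Filippov) solution framework together with a nonsmooth Lyapunov / LaSalle-type invariance argument to obtain asymptotic convergence rather than mere ultimate boundedness; particular care is needed to separate the stabilizing role of the $\chi$-term, which on its own would only confine $\epsilon_a$ to a residual set scaling with $\bar u_{\max}$, from the exact disturbance cancellation delivered by the $\gamma$-term.
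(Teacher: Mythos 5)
Your overall architecture matches the paper's: the cascade reduction (the lower block of \eqref{eq:epsilon} is autonomous and disturbance-free, so $\epsilon_b \to 0$, after which only the $\epsilon_a$-subsystem matters), and the positive definiteness of $\mathbf{Q} \coloneqq {W_g^*}^\top(\Lap\otimes I_n)W_g^*$ under Assumptions~\ref{as:connected} and \ref{as:ct} --- your kernel argument for the latter is in fact more self-contained than the paper's, which simply cites a reference. However, there is a genuine gap at the core Lyapunov step, and it is exactly the point you defer to ``nonsmooth analysis'' in your last paragraph. With your choice $V=\tfrac12\epsilon_a^\top\epsilon_a$, the sliding-mode term contributes $-\gamma\,\epsilon_a^\top\sign(\mathbf{Q}\epsilon_a)$ to $\dot V$. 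This quantity is not sign-definite: for a general positive definite $\mathbf{Q}$, the vectors $\epsilon_a$ and $\sign(\mathbf{Q}\epsilon_a)$ need not form an acute angle. For instance, with $\mathbf{Q}=\left[\begin{smallmatrix}1 & 2\\ 2 & 5\end{smallmatrix}\right]\succ 0$ and $\epsilon_a=(-1.5,\,1)^\top$ one gets $\mathbf{Q}\epsilon_a=(0.5,\,2)^\top$ and $\epsilon_a^\top\sign(\mathbf{Q}\epsilon_a)=-0.5<0$. Even when the term is positive, there is no lower bound of the form $c\norm{\epsilon_a}_1$ that would let $\gamma$ dominate the disturbance term $\epsilon_a^\top{W_g^*}^\top\bar B\bar u$. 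So your claim that the second condition in \eqref{con:gain} makes the $\gamma$-term dominate the unknown input is false as stated, and no Filippov/LaSalle machinery repairs it: the problem is algebraic (misalignment of the Lyapunov gradient with the argument of the sign function), not one of solution regularity.

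The paper closes exactly this gap with the one device you are missing: the $\mathbf{Q}$-weighted Lyapunov function $V(\epsilon_a)=\epsilon_a^\top\mathbf{Q}\epsilon_a$. Its gradient $2\mathbf{Q}\epsilon_a$ is precisely the argument of the sign function (once the cascade reduction sets $\epsilon_b=0$), so with $\zeta\coloneqq\mathbf{Q}\epsilon_a$ the sliding term yields exactly $-2\gamma\,\zeta^\top\sign(\zeta)=-2\gamma\norm{\zeta}_1$, while H\"older's inequality bounds the disturbance term as $2\abs{\zeta^\top {W_g^*}^\top\bar B\bar u}\le 2\,\bar u_{\max}\max_{i\in\mathbf{N}}\bra{\norm{\bar B_i}_1}\max_{i\in\mathbf{N}}\bra{\norm{W_{g,i}^*}_\infty}\norm{\zeta}_1$, which the $\gamma$-condition in \eqref{con:gain} strictly dominates. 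The quadratic part is handled by factoring $\mathbf{Q}A_a=\mathbf{Q}\bra{\Tilde{A}_L\mathbf{Q}^{-1}-\chi I}\mathbf{Q}$, giving $2\epsilon_a^\top\mathbf{Q}A_a\epsilon_a\le 2\bra{\norm{\Tilde{A}_L}/\sigma_{\min}(\mathbf{Q})-\chi}\norm{\zeta}_2^2$, which is negative under the $\chi$-condition (your version of this estimate, $\norm{\Tilde{A}_L}_2-\chi\sigma_{\min}(\mathbf{Q})<0$, is equivalent, so the quadratic part is not where the trouble lies). If you replace $\tfrac12\epsilon_a^\top\epsilon_a$ by $\epsilon_a^\top\mathbf{Q}\epsilon_a$, your cascade and positive-definiteness steps combine with this computation to reproduce the paper's proof.
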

\begin{proof}
Since \( A_b \) is Hurwitz, it follows directly that \( \lim_{t \to \infty} \epsilon_b = 0 \), given that \( \dot{\epsilon}_b = A_b \epsilon_b \) (see \eqref{eq:epsilon}). Thus, it remains to prove the stability of the following subsystem
\begin{equation}\label{eq:epsilon_a}
    \dot\epsilon_a = A_a\epsilon_a + {W_g^*}^\top \bar{B}\bar{u}\!-\!\gamma\,\mathrm{sign}\left({W_g^*}^\top(\Lap\otimes I_n)W_g^*\epsilon_a \right)\,.
\end{equation}
To simplify the notation, let \( \mathbf{Q} \coloneqq {W_g^*}^\top (\Lap \otimes I_n) W_g^* \), which under Assumptions~\ref{as:connected} and \ref{as:ct} is positive definite \cite{kim2019completely}.
Now, consider the following Lyapunov candidate
$
    V(\epsilon_a) = \epsilon_a^\top \mathbf{Q} \epsilon_a.
$
The time derivative of \( V \) along the trajectory of \eqref{eq:epsilon_a} follows
\begin{equation*}\vspace{-5pt}
\begin{split}
        \dot{V}&=2\epsilon_a^\top\left(\mathbf{Q}A_a\right)\epsilon_a + 2\epsilon_a^\top \mathbf{Q} {W_g^*}^\top \bar{B} \bar{u} - 2\gamma \epsilon_a^\top \mathbf{Q} \sign\left(\mathbf{Q}\epsilon_a \right)\\
        &=2(\mathbf{Q}\epsilon_a)^\top\left(\Tilde{A}_L \mathbf{Q}^{-1}-\chi I_{\sum_{i\in\mathbf{N}}w^*_{g,i}}\right)(\mathbf{Q}\epsilon_a)\\
        &\quad + 2\bar{u}^\top\bar{B}^\top {W_g^*}(\mathbf{Q}\epsilon_a) - 2\gamma(\mathbf{Q}\epsilon_a)^\top  \sign\left(\mathbf{Q}\epsilon_a \right)
\end{split}
\end{equation*}
Let \( \zeta \coloneqq \mathbf{Q}\epsilon_a \), 
we further obtain
\begin{equation*}\vspace{-5pt}
    \begin{split}
        \dot{V}&\le 2\left(\left\|\Tilde{A}_L \mathbf{Q}^{-1}\right\|\!-\!\chi\right)\|\zeta\|_2^2 + 2\left\|\bar{u}^\top \bar{B}^\top W_g^* \zeta \right\|_{\infty} \!\!- \!2 \gamma\|\zeta\|_1\\
        &\le 2\left(\left\|\Tilde{A}_L\right\| \left\|\mathbf{Q}^{-1}\right\|-\chi\right)\|\zeta\|_2^2 \\
        &\quad+ 2\left\|\bar{u}^\top \right\|_{\infty} \left\|\bar{B}^\top \right\|_{\infty} \left\| W_g^* \right\|_{\infty} \left\| \zeta\right\|_{\infty} - 2 \gamma\|\zeta\|_1
    \end{split}
\end{equation*}
\begin{equation*}
    \begin{split}
        &\le -2 \left(\chi - \left(\sigma_{\min}\left(\mathbf{Q}\right)\right)^{-1}\left\|\Tilde{A}_L\right\|\right)\|\zeta\|_2^2 \\
        &\quad - 2 \left( \gamma - \bar{u}_{\max}\max_{i\in\mathbf{N}}\left(\left\| \bar{B}_i \right\|_1\right)\max_{i\in\mathbf{N}}\left(\left\|W_{g,i}^*\right\|_{\infty}\right)\right)\|\zeta\|_1
    \end{split}
\end{equation*}

The conditions in \eqref{con:gain} ensure that \(\dot{V}\) is negative definite. Consequently,  \(\epsilon\) will asymptotically converge to zero. Since \(e = T\epsilon\), where \(T\) is defined in \eqref{eq:basis_T}, the convergence of \(e(t)\) is guaranteed, thus completing the proof.
\end{proof}

Given the state estimate $\hat{x}_i$, the unknown input at each sensor node can also be estimated by
\begin{equation}\label{eq:unknown_input}
    \hat{\bar{u}}_i =  \gamma\bar{B}_i^\dagger W_{g,i}^* \sign\left( {W_{g,i}^*}^\top \sum_{j=1}^N a_{ij} (\hat x_j - \hat x_i) \right).
\end{equation}
The state and unknown input estimates are useful in control design, as demonstrated by the case study in Section~\ref{sec:simu}.
\section{Case Study}\label{sec:simu}
Consider a vehicle platoon consisting of four vehicles interconnected via a predecessor-following (PF) information flow topology, as shown in Fig~\ref{fig:platoon}. 
\begin{figure}[htp] 
\centering
\scalebox{0.70}{\begin{tikzpicture}
\colorlet{net}{teal}
\tikzset{comm/.style = {color=net, very thick, dash pattern=on 4pt off 1.5pt}}

    \newcommand{\drawcar}[1]{

        \draw[fill=gray] (#1,0) rectangle (#1+2,0.5);

        \draw[fill=gray] (#1+0.5,0.5) rectangle (#1+1.5,1);

        \draw[fill=black] (#1+0.5,0) circle (0.2); 
        \draw[fill=black] (#1+1.5,0) circle (0.2);

        \draw[fill=white] (#1+0.5,0) circle (0.13); 
        \draw[fill=white] (#1+1.5,0) circle (0.13);

        \draw[fill=white] (#1+0.6,0.6) rectangle (#1+1.4,0.9);
    }

    \drawcar{0}{1}  
    \drawcar{2.5}{2}  
    \drawcar{5}{3} 
    \drawcar{7.5}{4} 

    
    \node [circle, 
        draw, 
        color=net, 
        fill=white, 
        text=black, 
        very thick,
        inner sep=6pt] (N1) at (1,3) {$\mathcal O_{1}$};
    \node [circle, 
        draw, 
        color=net, 
        fill=white, 
        text=black, 
        very thick,
        inner sep=6pt] (N2) at (3.5,3) {$\mathcal O_{2}$};
    \node [circle, 
        draw, 
        color=net, 
        fill=white, 
        text=black, 
        very thick,
        inner sep=6pt] (N3) at (6,3) {$\mathcal O_{3}$};
    \node [circle, 
        draw, 
        color=net, 
        fill=white, 
        text=black, 
        very thick,
        inner sep=6pt] (N4) at (8.5,3) {$\mathcal O_{3}$};

    \draw[comm] (N1) to (N2);    
    \draw[comm] (N2) to (N3);   
    \draw[comm] (N3) to (N4);  

\draw [-latex, semithick,color=gray] (1,1) -- (N1) node[midway, right, align=left] {$u_1$ \\ $y_1$};
\draw [-latex, semithick,color=gray] (3.5,1) -- (N2) node[midway, right, align=left] {$u_2$ \\ $y_2$};
\draw [-latex, semithick,color=gray] (6,1) -- (N3) node[midway, right, align=left] {$u_3$ \\ $y_3$};
\draw [-latex, semithick,color=gray] (8.5,1) -- (N4) node[midway, left, align=right] {$u_4$ \\ $y_4$};

\draw[semithick, dash pattern=on 2pt off 2pt] (-0.25,-0.65) -- (9.75,-0.6) -- (9.75,1.125) -- (-0.25,1.125) -- cycle;

\node[above] at (4.75,-0.65) {Vehicle Platoon};

\node[above] at (1,-0.05){$V_1$};
\node[above] at (3.5,-0.05){$V_2$};
\node[above] at (6,-0.05){$V_3$};
\node[above] at (8.5,-0.05){$V_4$};

\end{tikzpicture}} \\[-1.2ex]
    \caption{A platoon consisting of 4 vehicles following a unidirectional PF information flow topology and proposed DUIO $\{\mathcal{O}_i\}_{i\in\{1,\cdots,4\}}$ scheme.}
    \label{fig:platoon}
\end{figure}
%
Note that the communication graph among vehicles according to Assumption~\ref{as:connected} for the observer design in \eqref{eq:DUIO_ct} is undirected, which is widely adopted in platoon control \cite{qiang2023distributed}.
Unlike most existing studies that assume a constant velocity for the lead vehicle, here, the leader’s velocity is allowed to vary over time, which is essential for the practical deployment of platooning techniques. As such, the longitudinal dynamics of each vehicle can be modeled, as follows\cite{qiang2023distributed}
\begin{equation*}
\begin{aligned}
     &\dot s_i(t) = v_i(t),\ 
     \dot v_i(t) = a_i(t),\\
     &\dot a_i(t) = -\frac{1}{\tau}a_i(t) + \frac{1}{\tau}u_i(t),\,\forall i =\{1,\cdots,4\},
\end{aligned}
\end{equation*}
where $s_i$, $v_i$, and $a_i$ are the position, velocity, and acceleration of vehicle $i$, $u_i$ is the control input, and $\tau$ denotes the inertial lag of longitudinal dynamics of vehicles. 

This platoon system can be effectively controlled using a well-known control law \cite{Ploeg:itsc11}
\begin{multline}\label{eq:control_law0}
    u_i(t)={u}_{i-1}+k_s\left({s}_{i-1}(t)-s_i(t)-d_{i,i-1}(t)\right)\\+k_v\left({v}_{i-1}(t)-v_i(t)\right)+k_a\left({a}_{i-1}(t)-{a}_i(t)\right), 
\end{multline}
where $i\in\{2,3,4\}$, $k_s$, $k_v$ and $k_a$ are control gain parameters, $d_{i,i-1}$ is the desired inter-vehicle distance, which is set to a constant value $20\mathrm{m}$.
However, achieving consensus requires that each follower vehicle be equipped with appropriate sensors to measure headway distance, velocity difference, and acceleration difference relative to the vehicle ahead. Additionally, the control input ${u}_{i-1}$ of vehicle $i-1$ must be shared with the following vehicle $i$, which could raise cybersecurity concerns (e.g., hacked during the transmission). The DUIO proposed in this paper reduces both the number and cost of required sensors and eliminates the need for sharing control inputs within the vehicular network compared to the benchmark method.

Assume each vehicle $i$ only measures its own position $s_i$ and velocity $v_i$, i.e., $y_i=[s_i,\,v_i]^{\top},\,\forall i=\{1,\cdots,4\}$, which can be achieved with minimal effort for most vehicles. Then, the overall system can be formulated in the form of \eqref{eq:systemDecomposition} with
\begin{equation*}
\begin{aligned}
    &A=I_4\otimes \begin{bmatrix}
        0&1&0\\
        0&0&1\\
        0&0&-\frac{1}{\tau}
    \end{bmatrix},
    &B=I_4\otimes\begin{bmatrix}
        0\\
        0\\
        -\frac{1}{\tau}
    \end{bmatrix}\!=\!\matrices{B_1^{\top}\\B_2^{\top}\\B_3^{\top}\\B_4^{\top}}^{\top},
\end{aligned}
\end{equation*}
\begin{equation*}\vspace{-5pt}
\begin{aligned}
    &\bar{B}_1=\matrices{B_2&B_3&B_4},\ \bar{B}_2=\matrices{B_1&B_3&B_4},\\
    &\bar{B}_3=\matrices{B_1&B_2&B_4},\ \bar{B}_4=\matrices{B_1&B_2&B_3},
\end{aligned}
\end{equation*}
\begin{equation*}
    C=I_4\otimes\begin{bmatrix}
                1&0&0\\
                0&1&0    \end{bmatrix}=\matrices{C_1^\top&C_2^\top&C_3^\top&C_4^\top}^\top,
\end{equation*}
where $\tau$ is given by $0.07$. Clearly, this vehicle platoon system does not meet the rank condition \eqref{eq:rank_con} required by the existing DUIO scheme, which highlights the advantages of our proposed DUIO design.

With the DUIO, the state and input estimates can be plugged in and the control law \eqref{eq:control_law0} is changed to
\begin{multline}\label{eq:control_law}
    u_i(t)=\hat{u}_{i-1}+\sum_{j=1}^{i-1}\left(k_s\left(\hat{s}_i^{(j)}(t)-s_i(t)-d_{i,j}(t)\right)\right.\\ \left.+k_v\left(\hat{v}_i^{(j)}(t)-v_i(t)\right)+k_a\left(\hat{a}_i^{(j)}(t)-\hat{a}_i(t)\right)\right), 
\end{multline}
where $i\in\{2,3,4\}$, $\hat{s}_i^{(j)}$, $\hat{v}_i^{(j)}$ and $\hat{a}_i^{(j)}$ are the position, velocity and acceleration of vehicle $j$ estimated by $\mathcal{O}_i$, and $\hat{u}_{i-1}$ is the input estimate of the preceding vehicle. Note that the feedback of each follower is augmented by tracking error with respect to all preceding vehicles rather than the vehicle immediately ahead for enhanced convergence speed \cite{huang2023plug}. 


The control parameters are given as $k_s=3.5$, $k_v=4$ and $k_a=1$. Besides, the initial positions, velocities and accelerations of vehicles are set as $\setlength{\tabcolsep}{2.0pt}x(0)=\left[\begin{tabular}{cccccccccccc}
    150&22&0&120&21&1.1&90&21.5&0.6&60&20&1.3
\end{tabular}\right]^\top$. According to Theorem~\ref{thm:main_ct}, $\chi$ is set to $82.3039$, $\gamma$ is set to $383.1159$, the insertion maps $W_{g,i}^*$ and the output injection maps $L_i$ are provided in the supplementary document\footnote{\url{https://github.com/RuixuanZhaoEEEUCL/ECC2025.git}} due to space limitation.
\begin{figure}[ht]
    \centering
    \includegraphics[width=0.45\textwidth]{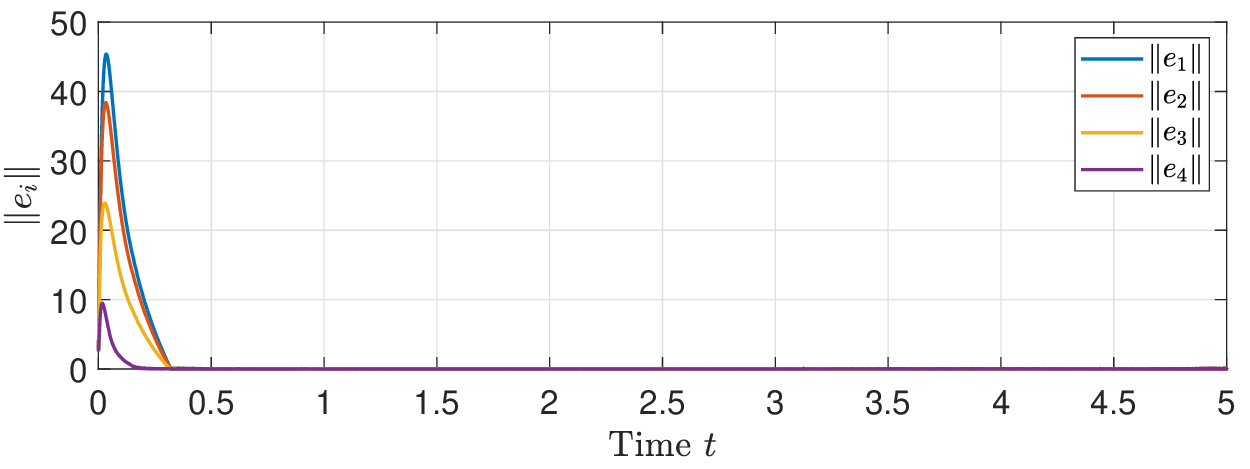}\\[-3.5ex]
     \includegraphics[width=0.45\textwidth]{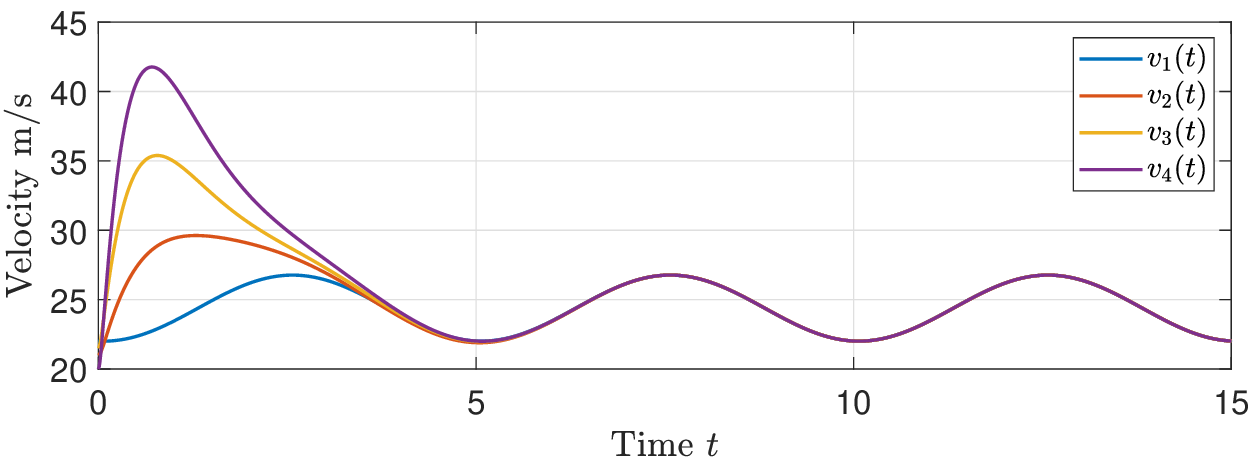}\\[-3.5ex]
      \includegraphics[width=0.45\textwidth]{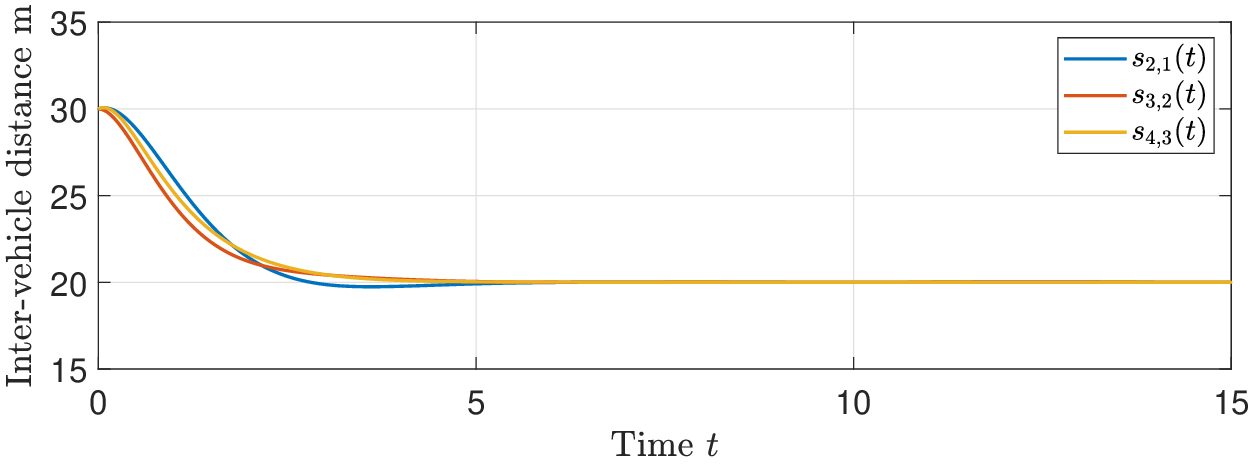}\\[-2ex]
    \caption{System performance subject to a time-varying velocity of the leading vehicle unknown to the follower vehicles. \textbf{Top}: Norm of estimation error at each node. \textbf{Middle}: Velocity trajectories of each vehicle. \textbf{Bottom}: Trajectories of inter-vehicle distance between neighboring vehicles.}
    \label{fig:sim_result_estimation_error}
\end{figure}

The effectiveness of the proposed DUIO design is verified in Fig.~\ref{fig:sim_result_estimation_error}, where the estimation error of each node asymptotically converges to zero within $0.5$ seconds. Velocity and inter-vehicle distance \( s_{i,i-1} = s_i - s_{i-1}, \,\forall i \in \{2,3,4\} \)  profiles are also shown. As can be seen, both can reach the desired values at the steady state. These results validate the capability of the DUIO-based control law in \eqref{eq:control_law} to handle the unknown, time-varying velocity of the leading vehicle, highlighting the potential of the proposed DUIO for practical applications.

\section{Conclusions}\label{sec:con}
In this paper, we present a novel Distributed Unknown Input Observer (DUIO) design for LTI systems using a geometric approach. By leveraging the advantages of $(C, A)$-invariant subspaces and analyzing the invariant zeros at each node, we establish a new joint geometric condition for DUIO, which relaxes the conservative individual rank conditions imposed by existing methods. Simulation results confirm the effectiveness of the proposed DUIO design and demonstrate its practical applicability. In future work, we aim to extend our methodology to discrete-time systems and incorporate process and measurement noise into the framework.


\end{document}